\newtheorem{observation}{Observation}
\newtheorem{definition}{Definition}
\newtheorem{Proposition}{Proposition}
\begin{document}

\preprint{APS/123-QED}
\title{Efficient entanglement generation and detection of generalized stabilizer states}

\author{Yihong Zhang}
\affiliation{Center for Quantum Information, Institute for Interdisciplinary Information Sciences, Tsinghua University, Beijing 100084, China}
\author{Yifan Tang}
\affiliation{Center for Quantum Information, Institute for Interdisciplinary Information Sciences, Tsinghua University, Beijing 100084, China}
\author{You Zhou}
\email{zyqphy@gmail.com}
\affiliation{School of Physical and Mathematical Sciences, Nanyang Technological University, 637371, Singapore}
\affiliation{Zhongguancun Haihua Institute for Frontier Information Technology, Beijing 100084, China}
\author{Xiongfeng Ma}
\email{xma@tsinghua.edu.cn}
\affiliation{Center for Quantum Information, Institute for Interdisciplinary Information Sciences, Tsinghua University, Beijing 100084, China}

\begin{abstract}
The generation and verification of large-scale entanglement are essential to the development of quantum technologies. In this paper, we present an efficient scheme to generate genuine multipartite entanglement of a large number of qubits, by using the Heisenberg interaction. This method can be conveniently implemented in various physical platforms, including superconducting, trapped-ion, and cold-atom systems. In order to characterize the entanglement of the output quantum state, we generalize the stabilizer formalism and develop an entanglement witness method. In particular, we design a generic searching algorithm to optimize entanglement witness with a minimal number of measurement settings under a given noise level. From the perspective of practical applications, we numerically study the trade-off between the experiment efficiency and the detection robustness.
\end{abstract}


\maketitle
\section{Introduction}
Entanglement is a crucial resource in quantum mechanics and plays a central role in  quantum information processing, including quantum communication \cite{Bennett1993Teleporting}, quantum computing \cite{raussendorf2001one, Galindo2002rmp}, quantum cryptography \cite{Xu2020Secure}, and quantum metrology \cite{giovannetti2011advances}. The experimental research of quantum information is committed to realizing massive quantum equipment, such as large-scale quantum networks \cite{cirac1997quantum} and fault-tolerant quantum computers \cite{shor1996fault, Nigg2014QEC}. One of the key objectives in this field is to generate entanglement among a large number of degree of freedoms, known as genuine multipartite entanglement (GME) \cite{Horodecki2009entanglement}. The size of the genuinely entangled quantum system becomes a figure of merit for assessing the advancement of quantum devices in the competition among various realizations.

The generation of large-scale GME generally requires a number of entangling gates. In the noisy intermediate-scale quantum era \cite{preskill2018quantum}, it is challenging to operate quantum gates precisely with a deep circuit, due to the decoherence and noise, which becomes the main obstacle in generating large-scale quantum entanglement. Compared with single-qubit gates, two-qubit entangling gates would normally introduce more noises to the system. Thus, in a quantum information processing task, we should minimize the number of two-qubit gates to suppress the negative effects of the noise. In this paper, we propose an efficient scheme based on the Heisenberg interaction to generate GME in large-scale quantum systems, which only contains a very shallow circuit and can be directly applied on many experimental platforms.

Due to the exponential scaling of the Hilbert space with the number of subsystems, it is generally a challenging task to fully detect the entanglement of a large quantum system, which normally requires quantum state tomography. Numerous theoretical and experimental efforts have been devoted to characterize GME \cite{Friis2019Reviews},
such as ion trap \cite{Monz2011,Friis2018}, photons \cite{Wang2018,Zhong2018}, Rydberg atoms \cite{Omran2019} and superconducting circuits \cite{Gong2019,Wei2020,Song2019}. With certain preknowledge of the prepared state, the entanglement witness (EW) \cite{TERHAL2001witness,guhne2009entanglement} has been proposed to efficiently determine whether a state is entangled by using a limited number of measurements. In general, the EW is a Hermitian operator $\mathcal{W}$ which has non-negative expectation values for all separable states and negative values for some specific entangled states \cite{guhne2009entanglement}, i.e.,
\begin{equation}\label{eq:EW_criterion}
\begin{split}
&\langle \mathcal{W}\rangle \geq 0, \mathrm{for\ all\ separable\ states}; \\
&\langle \mathcal{W}\rangle < 0, \mathrm{for\ some\ entangled\ states}.
\end{split}
\end{equation}
With the prior knowledge of an entangled pure state $\ket{\psi}$, there exists a generic construction of an EW operator to detect entanglement around $\ket{\psi}$. One widely used EW operator is given by
\begin{equation}\label{eq:projector_EW}
\mathcal{W}_{\psi} = \alpha \mathbb{I} - \ket{\psi}\bra{\psi},
\end{equation}
where $\alpha$ is the maximal fidelity between the state $\ket{\psi}$ and any bi-separable state \cite{MBourennane2004}. Here, $\mathcal{W}_{\psi}$ is a global EW operator and can detect GME of an unknown state $\rho$ close to $\ket{\psi}$. From the view of experimental implementation, two central concerns naturally arise. First, the EW operator is global and needs to be decomposed into local observables. Second, the generation of entanglement inevitably suffers from noises and hence the EW should be robust against the noises.
A few of quantum states with
certain symmetry including cluster states and more general graph states \cite{toth2005detecting,zhou2019detecting}, Greenberger-Horne-Zeilinger (GHZ)
states and W states \cite{ToolboxEWfor2007,Zhao2019Efficient} and more general permutation invariant states \cite{toth2009practical,zhou2019decomposition} are widely studied, which have efficient EW constructions. There are also a few other entanglement detection and quantification methods besides linear witness operators, such as entanglement criteria based on correlations \cite{Piotr2008, Li2017}, concurrences \cite{Mintert2007, Schmid2008}, Fisher information \cite{Hauke2016, Kourbolagh2019}, and spin squeezing from collective measurements \cite{sorensen2001many,sorensen2001entanglement,lucke2014detecting}.
Nevertheless, the entangled states generated by experimentally engineered Hamiltonians are usually complex and do not belong to standard Pauli stabilizer states. Simple EW construction methods cannot be directly applied here. How to efficiently witness GME for a relatively complex entangled state remains an open problem.

In this paper, we generalize the stabilizer formalism and apply the EW method to characterize the entanglement of our target states. We analyze the stabilizer structures and draw a family of EW operators which can be evaluated using a specific number of local measurements. Then, we design a searching algorithm to optimize the EW for the target state. In our method, optimization runs in the sense of finding the EW operator with the maximal noise tolerance. Here, the noise tolerance means the upper bound of tolerable noise for the experiment setup. Thus, given any noise condition, our method can always find an optimal EW operator with minimal local measurements to characterize GME for our target state.  In fact, our algorithm is general and can be used to design an optimal EW operator for other generalized stabilizer states. In particular, we find that the noise tolerance increases as the local measurement complexity of an EW operator grows. However, one cannot unboundedly increase the number of local measurement settings for the sake of the detection robustness. There exists a balance between detection efficiency and robustness. We investigate the trade-off between experiment efficiency and detection robustness to help find the most appropriate EW operator in the real laboratory.

The paper is organized as follows. Sec.~\ref{Sec:state_generation} introduces an efficient generation scheme of the large-scale entangled state. In Sec.~\ref{Sec:general_stabilizer}, we generalize the stabilizer formalism and then in Sec.~\ref{Sec:GME_witness} we apply it to construct the EW for the non-Pauli stabilizer state. Sec.~\ref{Sec:algorithm} proposes a generic searching algorithm to find the optimal EW operator. In Sec.~\ref{Sec:simulation}, we show the numerical results and discuss the trade-off between the detection robustness of a witness and its required number of measurement settings. Finally, we conclude and discuss possible future directions.

\section{Entanglement Generation}\label{Sec:state_generation}
Entangling gates are essential to the entanglement generation. The control of qubit interactions is the core technology of generating two-qubit or multi-qubit gates, where great efforts have been devoted to this direction \cite{jonathan2000fast, wang2001entanglement, yamamoto2010quantum}. Typical interactions arising in most experimental architectures include the Ising model \cite{Ising1925} and Heisenberg model \cite{Heisenberg1928}. The Ising interaction is usually implemented in trapped ions \cite{Kim_2011}, Rydberg atoms \cite{Labuhn2016}, nuclear magnetic resonance \cite{ryan2008liquid}, flux and charge superconducting systems \cite{orlando1999superconducting, makhlin1999josephson}, and can generate controlled-phase gate and controlled-NOT gate by tuning the time-evolution parameters of the spin-spin interaction. The Heisenberg interaction is widely employed in many experimental architectures and can generate various entangling gates. In this paper, we focus on the Heisenberg interaction, but our method can be applied to the Ising interaction as well.

The Heisenberg model is frequently employed to quantum dot spins \cite{loss1998quantum}, nuclear spins \cite{Simon2008}, and cavity QED \cite{imamog1999quantum}. The coupling Hamiltonian of the Heisenberg interaction is given by
\begin{equation}\label{eq:heisenberg}
H^{(ij)} = \frac{J}{2}(X_i X_j + Y_i Y_j + Z_iZ_j),
\end{equation}
where $J$ is the coupling strength between the qubits $i$ and $j$ and we denote $I_j, X_j, Y_j$ and $Z_j$ as the identity and Pauli operators for the qubit $j$. The time evolution of the Heisenberg interaction is given by
\begin{equation}
U^{(ij)}(t) = \exp[-iH^{(ij)}t] =
\left(
  \begin{matrix}
    e^{-iJt/2} & 0 & 0 & 0  \\
    0 & \cos(Jt)e^{iJt/2} & -i\sin(Jt)e^{iJt/2} & 0  \\
    0 & -i\sin(Jt)e^{iJt/2} & \cos(Jt)e^{iJt/2} & 0  \\
    0 & 0 & 0 & e^{-iJt/2}  \\
  \end{matrix}
\right)
\end{equation}
By tuning the evolution time at $t = \frac{\pi}{4J}$, a two-qubit $\sqrt{\mathrm{SWAP}}$ gate is obtained. In this paper, we will employ $\sqrt{\mathrm{SWAP}}$ gate to generate large-scale entanglement, and our method can be applied to other two-qubit entangling gates as well. One can use the $\sqrt{\mathrm{SWAP}}$ gates with several additional single qubit rotations to generate controlled-phase and controlled-NOT gates \cite{schuch2003natural}. Hence, the $\sqrt{\mathrm{SWAP}}$ gate, together with single-qubit gates, forms the complete gate set for universal quantum computing \cite{tanamoto2009efficient}. As mentioned above, the controlled-phase gate and controlled-NOT gate are unable to achieve directly in the Heisenberg interaction. The construction of the controlled gates pays a considerable cost in terms of the circuit depth and the number of qubit operations. Thus, one would better directly use $\sqrt{\mathrm{SWAP}}$ to generate entangled states, which is simpler, more efficient and robust to quantum noises. It is worth noting that the generated target state is not a standard stabilizer state, which is uniquely identified by a set of $n$-fold Pauli tensors.

The connection structure is similar to the Affleck-Kennedy-Lieb-Tasaki state \cite{PhysRevLett.59.799}, as shown in Fig.~\ref{Fig:AKLT}. Each pair of spin-1/2 is set to the quantum singlet, whereas the coupling operation is the $\sqrt{\mathrm{SWAP}}$ gate. The generation procedure runs as follows: First prepare the original register as $N$ singlets $\ket{\Psi_0} = \ket{\Psi^-}^{\otimes N}$, where $\ket{\Psi^-} = (\ket{01} - \ket{10})/ \sqrt{2}$, and then simultaneously apply the $\sqrt{\mathrm{SWAP}}$ gate between the nearest neighbor qubits.

\begin{figure}[hbtp!]
\begin{tikzpicture}[
    scale=1,
    interact/.style={blue,very thick},
    atom/.style={shade,shading=ball,circle,ball color=gray!10!},
    iswap/.style = {shape=ellipse,draw,dashed,color=red,thick,minimum width=50,minimum height=25,label=below:$\sqrt{SWAP}$},
]
  \node[atom] (obj1) at (-3.5,0) {};
  \node[atom] (obj2) at (-2.5,0) {};
  \node[atom] (obj3) at (-1.5,0) {};
  \node[atom] (obj4) at (-.5,0) {};
  \node[atom] (obj5) at (.5,0) {};
  \node[atom] (obj6) at (1.5,0) {};
  \node[atom] (obj7) at (2.5,0) {};
  \node[atom] (obj8) at (3.5,0) {};
  \draw[-,interact] (-4.5,0) -- (obj1);
  \draw[-,interact] (obj2) -- (obj3);
  \draw[-,interact] (obj4) -- (obj5);
  \draw[-,interact] (obj6) -- (obj7);
  \draw[-,interact] (obj8) -- (4.5,0);
  \node[iswap] at (-3,0) {};
  \node[iswap] at (-1,0) {};
  \node[iswap] at (1,0) {};
  \node[iswap] at (3,0) {};
  \node[atom] (obj9) at (-1.3,-1.5) {};
  \node[atom] (obj10) at (-0.3,-1.5) {};
  \draw[-,interact] (obj9) -- (obj10) node[right] {};
  \node[]  at (1.5,-1.5) {$=(\ket{01}-\ket{10})/\sqrt{2}$};
\end{tikzpicture}
\caption{(Color Online) Genuine multipartite entangled target state. The qubit pairs connected by solid lines denote singlets and the dashed ovals are $\sqrt{\mathrm{SWAP}}$ gates.}\label{Fig:AKLT}
\end{figure}
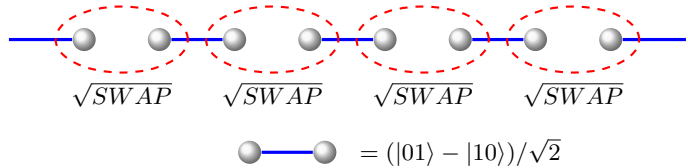

Finally, we achieve the $2N$-qubit entangled state as the target state,
\begin{equation}\label{eq:target_state}
\ket{\Psi} = \bigotimes_{i = 1}^N \sqrt{\mathrm{SWAP}}_{2i, 2i + 1}\ket{\Psi^-}^{\otimes N}.
\end{equation}
Note that the target state obtained here has the periodic boundary condition, i.e., the first and last qubits are coupled by the $\sqrt{\mathrm{SWAP}}$ gate. This may be unrealistic in some experiments when the qubits are positioned in a straight line and the interactions could only be employed between the nearest neighboring qubits. However, the lack of periodic boundary condition will not affect our analysis much. We will elaborate more on this in Sec.~\ref{Sec:GME_witness}. Furthermore, the target state itself has an interesting property. Since the Heisenberg model of Eq.~\eqref{eq:heisenberg} has a $SU(2)$ symmetry, the magnetic quantum number $\sum Z_i$ of the whole system is constant and maintains 0 during the entanglement generation procedures. Also, the whole generation procedures only contain a depth-1 circuit, which is resource-efficient for the experiment and can achieve very high fidelity.

\section{Generalized stabilizer formalism}\label{Sec:general_stabilizer}
Let $S_i$ denote the stabilizer operator on an $n$-qubit quantum system. A stabilizer $S_i$ is an $n$-fold tensor product of $n$ operators chosen from the one qubit Pauli operators $P_1 = \pm\{X, Y, Z, \mathbb{I}\}$. A stabilizer set $S = \{S_1, \cdots , S_n\}$ consisting of $n$ mutually commuting and independent stabilizer operators is called  the set of stabilizer ``generators''. The $n$ operators in set $S$ uniquely identify a state $\ket{\psi}$ satisfying $S_i\ket{\psi} = \ket{\psi}$ for $i = 1, \cdots, n$. Therefore, the density matrix of $\ket{\psi}$ can be written as $\ket{\psi}\bra{\psi} = \prod_i \frac{S_i + \mathbb{I}}{2}$ \cite{toth2005entanglement}. The entangled state we prepare in Sec.~\ref{Sec:state_generation} is not a standard stabilizer state, since the $\sqrt{\mathrm{SWAP}}$ operations transform the single Pauli operator to the summation of Pauli operators in $P_1^{\otimes n}$. This motivates us to generalize the definition of standard stabilizer state to the \emph{generalized stabilizer state} \cite{Nielsen2011Quantum}.

\begin{definition}[Generalized stabilizer state]\label{def:generalized_stabilizer}
For an $n$-qubit quantum system, a generalized stabilizer state $\ket{\Psi}$ is the unique eigenstate to eigenvalue $+1$ of the $n$ mutually commuting and independent generalized stabilizer operators $\{S_1, \cdots, S_n\}$, where each $S_i$ is an arbitrary Hermitian and unitary operator. Set $S$ is called the set of generalized stabilizer generators.
\end{definition}

\begin{Proposition}
For any pure state $\ket{\varphi} \in \mathcal{H}_2^{\otimes n}$, there exists a generalized stabilizer set $\{S_\varphi\}$ which can uniquely determine $\ket{\varphi}$.
\end{Proposition}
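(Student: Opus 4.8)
The plan is to build the generalized stabilizer set by unitarily transporting the canonical stabilizers of the computational basis. Recall that the Pauli operators $Z_1, \ldots, Z_n$ form a standard stabilizer generating set whose unique common $+1$ eigenstate is $\ket{0}^{\otimes n}$. For an arbitrary pure state $\ket{\varphi} \in \mathcal{H}_2^{\otimes n}$, I would first invoke the fact that the unitary group acts transitively on pure states, so there exists a unitary $U$ with $U\ket{0}^{\otimes n} = \ket{\varphi}$. I then define the candidate generators by conjugation,
\begin{equation}
S_i := U Z_i U^\dagger, \quad i = 1, \ldots, n,
\end{equation}
and claim that $\{S_1, \ldots, S_n\}$ is a generalized stabilizer set for $\ket{\varphi}$ in the sense of Definition~\ref{def:generalized_stabilizer}.

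The next step is to verify that each $S_i$ meets the requirements of the definition. Hermiticity follows since $S_i^\dagger = U Z_i^\dagger U^\dagger = U Z_i U^\dagger = S_i$, and unitarity is immediate because $S_i$ is a product of unitaries. Mutual commutativity is inherited from the Pauli operators, as $S_i S_j = U Z_i Z_j U^\dagger = U Z_j Z_i U^\dagger = S_j S_i$. Finally, $\ket{\varphi}$ is stabilized: $S_i \ket{\varphi} = U Z_i U^\dagger U \ket{0}^{\otimes n} = U Z_i \ket{0}^{\otimes n} = U \ket{0}^{\otimes n} = \ket{\varphi}$.

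The crux of the argument is uniqueness and independence. Since each $S_i$ is Hermitian and unitary, it squares to the identity and has spectrum $\{+1, -1\}$; the commuting family $\{S_i\}$ can therefore be simultaneously diagonalized, with common eigenspaces labeled by sign patterns $\mathbf{s} \in \{+1, -1\}^n$. Conjugation by $U$ carries the joint eigenspace decomposition of $\{Z_i\}$ onto that of $\{S_i\}$, so the $2^n$ sectors remain one-dimensional exactly as for the Pauli operators. In particular the all-$+1$ sector is spanned by $U\ket{0}^{\otimes n} = \ket{\varphi}$ alone, which establishes both that $\ket{\varphi}$ is the unique common $+1$ eigenstate and that the generators are independent.

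I expect the main subtlety to lie not in the construction but in pinning down the right notion of independence for generalized stabilizers: unlike the Pauli case, there is no group-multiplication structure that makes ``no generator is a product of the others'' the natural statement. The cleanest resolution is to define independence through the dimension count of the joint eigenspaces, equivalently that the product of projectors $\prod_i \frac{S_i + \mathbb{I}}{2}$ equals the rank-one projector $\ket{\varphi}\bra{\varphi}$, and to observe that this property is manifestly preserved under the unitary conjugation that generates $\{S_i\}$ from $\{Z_i\}$.
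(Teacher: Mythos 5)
Your construction is exactly the one the paper uses: conjugate the canonical stabilizers $\{Z_i\}$ of $\ket{0}^{\otimes n}$ by a unitary $U$ satisfying $U\ket{0}^{\otimes n} = \ket{\varphi}$. The paper's proof stops after stating this; your additional verification of Hermiticity, commutativity, and uniqueness via the conjugated joint-eigenspace decomposition simply makes explicit what the paper leaves implicit, so the two arguments are essentially identical.
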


\begin{proof}
For $\ket{\varphi} \in \mathcal{H}_2^{\otimes n}$, we have $\ket{\varphi} = U_\varphi \ket{0}^{\otimes n}$, where $U_\varphi$ is a specific unitary operator determined by $\ket{\varphi}$. Since $S = \{Z_i\}$ is the stabilizer set of $\ket{0}^{\otimes n}$, we can derive $S_\varphi = U_\varphi \{Z_i\}U_\varphi^\dag$ as the stabilizer set of $\ket{\varphi}$.
\end{proof}

From Definition \ref{def:generalized_stabilizer}, we can easily construct a stabilizer set under any unitary transformation. Designate $\ket{\psi}$ as a standard stabilizer state and $S = \{S_i\}$ as its stabilizer generator set. Applying an arbitrary unitary $U$ on $\ket{\psi}$, we have $(US_iU^\dag) U\ket{\psi} = U\ket{\psi}$ for all $i$'s. In other words, set $S_U = U\{S_i\}U^\dag$ is the generator set for the generalized stabilizer state $U\ket{\psi}$, where $U\{S_i\}U^\dag$ represents premultiplying $U$ and postmultiplying $U^\dag$ to each element in $S$. There is a similar definition of generalized stabilizer state mentioned in Ref. \cite{plenio2007remarks} where the stabilizer $S_i$ is defined as an $N$-fold tensor product of arbitrary possibly non-Hermitian, linear operator. Our definition loosens the restriction of $N$-fold tensor product formulation, but limits the stabilizer $S_i$ to a Hermitian operator. We also remark that this generalized stabilizer formalism is used to construct the witness for the $W$ state \cite{guhne2009entanglement} and to quantify quantum coherence in the multipartite system \cite{Ding2020coherence}.

Now return to our target entangled state $\ket{\Psi}$ defined in Eq.~\eqref{eq:target_state}, which is constructed by $2N$ singlets with $\sqrt{\mathrm{SWAP}}$ operations connecting every neighboring pair of sites. Each singlet $\ket{\Psi^-}$ is a standard stabilizer state, whose stabilizer set can be chosen as $\{-XX, -ZZ\}$. Then consider the set of $\sqrt{\mathrm{SWAP}}$ operations as an additional operation $U$ applied on the $2N$ singlets, where $U = \bigotimes_{i = 1}^N \sqrt{\mathrm{SWAP}}_{2i, 2i + 1}$. Hence, the corresponding transformed stabilizer set for each singlet become $U\{-XX, -ZZ\}U^\dag$, which we define as
\begin{equation}\label{eq:U_transform_stabilizers}
\begin{split}
\tilde{S}^{(i)}_{XX} &= -UX_jX_{j+1}U^\dag \\
&= -(X_{j-1}\mathbb{I}_j+\mathbb{I}_{j-1}X_j - Y_{j-1}Z_j + Z_{j-1}Y_j) \otimes (X_{j+1}\mathbb{I}_{j+2}+\mathbb{I}_{j+1}X_{j+2} + Y_{j+1}Z_{j+2} - Z_{j+1}Y_{j+2}), \\
\tilde{S}^{(i)}_{ZZ} &= -UZ_jZ_{j+1}U^\dag \\
&= -(Z_{j-1}\mathbb{I}_j+\mathbb{I}_{j-1}Z_j + Y_{j-1}X_j - X_{j-1}Y_j) \otimes (Z_{j+1}\mathbb{I}_{j+2}+\mathbb{I}_{j+1}Z_{j+2} - Y_{j+1}X_{j+2} + X_{j+1}Y_{j+2}),
\end{split}
\end{equation}
with $j = 2i - 1$. Here $\tilde{S}^{(i)}_{XX}$ and $\tilde{S}^{(i)}_{ZZ}$ are the $U$-transformed stabilizers for the $i$-th singlet. In general, the single Pauli product term $X_jX_{j+1} (Z_jZ_{j+1})$ are transformed into the linear combinations of 16 Pauli tensors under the entangling operation $U$.

\section{GME witness for the generalized stabilizer state}\label{Sec:GME_witness}
In order to study the entanglement properties of a given state, one may conventionally employ the tomography method to fully characterize the quantum state. As we known, quantum state tomography is resource-intensive and becomes impractical for large-scale quantum systems. The EW method is proposed to balance the measurement complexity and the completeness of state information, which can efficiently verify the entanglement for any given state \cite{guhne2009entanglement}. Before giving detailed descriptions of the EW method, we first give a rigorous definition of GME. A pure state $\ket{\psi}$ is called bi-separable if we can find a bi-partition $\{A, B\}$ such that $\ket{\psi} = \ket{\phi}_A \otimes \ket{\chi}_B$. A mixed state $\rho$ is called bi-separable if it can be decomposed into a convex combination of the pure product states $\ket{\psi_i} = \ket{\phi}_{A_i} \otimes \ket{\chi}_{B_i}$ under bipartitions $\{A_i, B_i\}$, i.e., $\rho = \sum_i p_i \ket{\psi_i}\bra{\psi_i}$ and each $\ket{\psi_i}$ can have different bipartition. Otherwise, the state possesses genuine multipartite entanglement \cite{Horodecki2009entanglement,guhne2009entanglement}.

\subsection{Projector-based witness}
As for our target state $\ket{\Psi}$ of Eq.~\eqref{eq:target_state}, the projector-based EW operator can be written as
\begin{equation}\label{eq:original_EW}
\mathcal{W}_{\Psi} = \frac{5}{8} \mathbb{I} - \ket{\Psi}\bra{\Psi},
\end{equation}
Here, the coefficient $\alpha_{\Psi} = 5/8$ is computed as the largest Schmidt coefficient of $\ket{\Psi}\bra{\Psi}$ under any bipartition \cite{MBourennane2004}. Note that for our target state, this value remains constant and has nothing to do with the amount $N$ of singlets. The operator $\mathcal{W}_{\psi}$ is a global witness and can detect GME of an unknown state $\rho$ close to $\ket{\Psi}$. One can always conclude that $\rho$ is genuinely multipartite entangled when its fidelity satisfies $F_{\rho} = \tr(\ket{\Psi}\bra{\Psi}\rho) > 5/8$.

Then the question comes to how to measure the witness operator $\mathcal{W}_{\Psi}$. Now we show that the projector-based witness in Eq.~\eqref{eq:original_EW} can be decomposed as a summation of the generalized stabilizers terms. As mentioned in Sec.~\ref{Sec:general_stabilizer}, target state $\ket{\Psi}$ is the unique $+1$ eigenstate of each element in the stabilizer set $\{\tilde{S}^{(i)}_{XX}, \tilde{S}^{(i)}_{ZZ}\}$, thus, we can define the corresponding stabilizer projectors,
\begin{equation}\label{eq:stabilizer_projector}
\begin{split}
\tilde{P}^{(i)}_{XX} = \frac{\tilde{S}^{(i)}_{XX} + \mathbb{I}}{2}, \\
\tilde{P}^{(i)}_{ZZ} = \frac{\tilde{S}^{(i)}_{ZZ} + \mathbb{I}}{2}.
\end{split}
\end{equation}
Then, the projectors onto the target state can be written as the product of the stabilizing projectors,
\begin{equation}
\ket{\Psi}\bra{\Psi} = \prod_i \tilde{P}^{(i)}_{XX} \tilde{P}^{(i)}_{ZZ}.
\end{equation}
From the above definition, we can see the target state projector can always be decomposed into a combination of products of stabilizer terms. One can measure these observables in experiments and obtain an estimation value of $\mathcal{W}_{\Psi}$.

The key challenge of the EW method is to efficiently estimate the expectation value of $\mathcal{W}_{\Psi}$ in experiments. The experimental efforts for measuring a witness can be described by the number of local measurements. Denote $\bigotimes_{k = 1}^n \mathcal{O}^{(k)}$ as a local measurement setting (LMS), which consists of performing Pauli measurements $\mathcal{O}^{(k)} \in \{X, Y, Z\}$ or doing nothing ($\mathcal{O}^{(k)} = \mathbb{I}$) to the $k$-th qubit \cite{Guhne2002}. Note that the measurements in one LMS are performed simultaneously. For an EW operator $\mathcal{W}$, we call the number of LMSs required to measure $\mathcal{W}$ as its \emph{local measurement complexity} (LMC) \cite{zhou2019decomposition}, denoted by $C_\mathcal{W}$, which is an important quantity to evaluate the efficiency of $\mathcal{W}$ in experiments. For typical symmetric entangled states, GHZ and $W$ states, one can make efficient decompositions of the corresponding state projectors and estimate the fidelities with $N + 1$ and $2N + 1$ LMSs, respectively. However, for some more complex entangled states, the fidelity estimation requires too much measurement effort even using state decomposition. In this case, instead of measuring the projector-based witness $\mathcal{W}_\psi$ of Eq.~\eqref{eq:projector_EW}, one can alternatively measure a new witness \cite{ToolboxEWfor2007}
\begin{equation}
\mathcal{W}'_\psi = \alpha \mathbb{I} - \ket{\psi}\bra{\psi} + Q,
\end{equation}
where $Q$ is a positive operator, and the criterion in Eq.~\eqref{eq:EW_criterion} can always be satisfied. Then the question comes to how to choose the positive operator $Q$ so that the estimation of witness requires only a few measurements, whereas still obtaining a good bound on the fidelity. One feasible solution is to subtract the complicated state projector $\ket{\psi}\bra{\psi}$ and replace it by the linear combination of the operators with fewer LMSs $\{P_i\}$, i.e.,
\begin{equation}
Q = \ket{\psi}\bra{\psi} + \sum_ic_i{P_i}.
\end{equation}
Note that the operator set $\{P_i\}$ is indeed a decomposition of the projector, i.e., $\ket{\psi}\bra{\psi} = \prod_i P_i$, thus we call it the \emph{bounding decomposition method}.

Now return to our target state. Unlike standard Pauli stabilizer states, the $\sqrt{\mathrm{SWAP}}$ operations make the stabilizers of our target state more complicated [see Eq.~\eqref{eq:U_transform_stabilizers}], which may increase the measurement effort for estimating the fidelity. It is easy to see the LMC for the projector-based EW operator of Eq.~\eqref{eq:original_EW} is $C_{\mathcal{W}_\psi} = 3^{2(N - 1)}$, comparable to the experimental cost for quantum state tomography. Obviously, this witness is impractical for a large-scale quantum system. One can derive an appropriate operator $Q$ via the bounding decomposition method to obtain a more efficient witness. We will demonstrate this with several cases in the next section.

\subsection{Witness operator with $3^k$ LMSs}
In this section, we will focus on the bounding decomposition method for the EW operator. We first review a simple decomposition scheme for projectors \cite{toth2005detecting, zhou2019detecting}.
\begin{Proposition}\label{prop:P1P2}
For a set of projectors $\{P_1, \cdots, P_k\}$, we have
\begin{equation}\label{eq:P1P2}
 P_1 \cdots P_k \geq  P_1 + \cdots + P_k - (k - 1)\mathbb{I},
\end{equation}
where $A \geq B$ means $(A - B)$ is positive semidefinite.
\end{Proposition}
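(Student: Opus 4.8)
The plan is to exploit the fact that, in the setting relevant to this paper, the projectors $\{P_1,\dots,P_k\}$ mutually commute: they are the stabilizer projectors $\tilde{P}^{(i)}_{XX},\tilde{P}^{(i)}_{ZZ}$ built from commuting stabilizers, so that $\prod_i P_i$ is itself the Hermitian projector onto $\ket{\Psi}$. Commuting Hermitian operators admit a common orthonormal eigenbasis $\{\ket{v}\}$, so I would verify the operator inequality eigenvector by eigenvector: the difference of the two sides is Hermitian and diagonal in this basis, hence it is positive semidefinite if and only if all of its diagonal entries are nonnegative.

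First I would note that on any common eigenvector $\ket{v}$ each $P_i$ acts as the scalar $0$ or $1$. Let $n=n(v)\in\{0,1,\dots,k\}$ be the number of projectors whose eigenvalue on $\ket{v}$ equals $1$. Then the left-hand side $P_1\cdots P_k$ acts as the scalar $1$ when $n=k$ and as $0$ otherwise, while the right-hand side $P_1+\cdots+P_k-(k-1)\mathbb{I}$ acts as the scalar $n-(k-1)$. Comparing the two on $\ket{v}$ reduces the entire claim to the elementary inequality $[\,n=k\,]\ge n-k+1$ for every integer $0\le n\le k$, where $[\cdot]$ denotes the indicator. This is immediate: for $n=k$ both sides equal $1$; for $n=k-1$ both sides equal $0$; and for $n\le k-2$ the left side is $0$ while the right side $n-k+1\le -1$ is strictly negative. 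Since this holds on every common eigenvector, the diagonal difference is nonnegative and the inequality follows.

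An equivalent route I would keep in reserve is induction on $k$. The base case $k=2$ rests on the identity $P_1P_2-(P_1+P_2-\mathbb{I})=(\mathbb{I}-P_1)(\mathbb{I}-P_2)$; because $\mathbb{I}-P_1$ and $\mathbb{I}-P_2$ are commuting projectors, their product is again an orthogonal projector and hence positive semidefinite. For the inductive step I would set $Q=P_1\cdots P_{k-1}$, which is a projector commuting with $P_k$, apply the $k=2$ case to obtain $QP_k\ge Q+P_k-\mathbb{I}$, then add $P_k-\mathbb{I}$ to the inductive bound $Q\ge P_1+\cdots+P_{k-1}-(k-2)\mathbb{I}$ and invoke transitivity of the semidefinite order to conclude $QP_k=P_1\cdots P_k\ge P_1+\cdots+P_k-(k-1)\mathbb{I}$.

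I expect the only genuine subtlety to be the commutativity hypothesis: without it the product $P_1\cdots P_k$ need not be Hermitian — already for $k=2$, $P_1P_2$ is Hermitian precisely when $P_1P_2=P_2P_1$ — so the symbol $\ge$ would not even be well defined. I would therefore make explicit at the outset that the $P_i$ are taken to commute, which is exactly the situation for the stabilizer projectors used here. Granting that, both arguments are routine, and the eigenvalue count is the cleanest to write down since it collapses the operator statement to a one-line inequality about integers.
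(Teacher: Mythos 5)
Your proof is correct. Note that the paper itself gives no proof of this proposition --- it is stated as a review of a known decomposition scheme with citations to earlier work --- so there is no in-paper argument to compare against; your write-up simply supplies the missing justification. Both of your routes (simultaneous diagonalization reducing the operator inequality to the scalar inequality $[\,n=k\,]\ge n-k+1$, and induction on $k$ from the identity $P_1P_2-(P_1+P_2-\mathbb{I})=(\mathbb{I}-P_1)(\mathbb{I}-P_2)$) are valid and standard, and the induction is essentially the argument found in the cited references. Your caveat about commutativity is well taken and worth making explicit: as stated, the proposition is only meaningful for mutually commuting projectors, since otherwise $P_1\cdots P_k$ need not be Hermitian and the ordering $\geq$ is undefined; in the paper's application the $\tilde{P}^{(i)}_{XX}$, $\tilde{P}^{(i)}_{ZZ}$ are built from a set of mutually commuting stabilizers, so the hypothesis is satisfied. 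The one small point you could tighten is in the inductive step: you should note explicitly that $Q=P_1\cdots P_{k-1}$ is itself an orthogonal projector because a product of commuting orthogonal projectors is idempotent and Hermitian, which is what licenses applying the $k=2$ case to the pair $(Q,P_k)$.
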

In the following part we will omit the brackets when there is no ambiguity. As for the inequality of Eq.~\eqref{eq:P1P2}, the LMC needed to measure the left part is larger than that for the right part. For example, set $P_1 = (\mathbb{I} + X)/2, P_2 = (\mathbb{I} + Z)/2$, the LMC for measuring $P_1P_2$ is $C_L = 3$, whereas for measuring $(P_1 + P_2)$ is $C_R = 2$. Thus, one can improve the efficiency of an EW operator by changing the product form into the summation form. Since no overlap exists between the LMSs for the transformed stabilizers $\tilde{S}^{(i)}_{XX}$ and $\tilde{S}^{(i)}_{ZZ}$ of Eq.~\eqref{eq:U_transform_stabilizers}, one should separate the corresponding projectors to decrease the total LMC for the witness. According to Proposition \ref{prop:P1P2}, one can choose $Q = \ket{\Psi}\bra{\Psi} + 2\mathbb{I} - \prod_i^N \tilde{P}^{(i)}_{XX} - \prod_i^N \tilde{P}^{(i)}_{ZZ}$ and propose a more efficient witness
\begin{equation}\label{eq:XZ_EW}
\mathcal{W}'_{\Psi} = \frac{13}{8} \mathbb{I} - (\prod_i^N \tilde{P}^{(i)}_{XX} + \prod_i^N \tilde{P}^{(i)}_{ZZ}),
\end{equation}
with the LMC $C_{\mathcal{W}_{\Psi}'} = 2\cdot 3^{(N - 1)}$, whereas the LMC for the original EW operator defined in Eq.~\eqref{eq:original_EW} is $C_{\mathcal{W}_{\Psi}
} = 3^{2(N - 1)}$. Our slight modification makes a square root reduction for the LMC. Moreover, the product terms $\prod_i^N \tilde{P}^{(i)}_{XX}$ and $\prod_i^N \tilde{P}^{(i)}_{ZZ}$ can be further decomposed based on Proposition \ref{prop:P1P2}. In general, we can present a projector product subset with different LMSs for the EW operator, as shown in the following proposition.

\begin{Proposition}\label{prop:projector_subset}
For the target state of Eq.~\eqref{eq:target_state} and its corresponding stabilizer projectors $\{\tilde{P}^{(i)}\}_{XX/ZZ}$ defined in Eq.~\eqref{eq:U_transform_stabilizers} and \eqref{eq:stabilizer_projector}, the LMC of measuring the projector product term $\tilde{P}^{(i_1)} \cdots \tilde{P}^{(i_k)} (1 \leq i_1 < \cdots < i_k \leq N)$ is given by,
\begin{equation}\label{eq:noLMC}
\begin{aligned}
LMC = \left\{
             \begin{array}{ll}
               9, & k=1; \\
               15, & k=2 \;\mathrm{and}\; (i_1-i_2) \;\mathrm{mod}\; N = \pm1; \\
               3^{\sum_{r = 1}^k \min(i_r - i_{r - 1}, 2)}, & \hbox{otherwise,}
             \end{array}
           \right.
\end{aligned}
\end{equation}
where $i_0$ is set to be $i_k-N$ with the periodic boundary condition.
\end{Proposition}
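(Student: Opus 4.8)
The plan is to turn the LMC into a covering problem over the qubit pairs that carry the stabilizers, with a single algebraic identity about $\sqrt{\mathrm{SWAP}}$ doing the real work and the rest being combinatorics of Pauli supports. First I would prove the collapse identity. Setting $V=\sqrt{\mathrm{SWAP}}_{2i,2i+1}$, the right factor of $\tilde S^{(i)}_{XX}$ and the left factor of $\tilde S^{(i+1)}_{XX}$ both act on the pair $(2i,2i+1)$ and equal $VX_{2i}V^\dag$ and $VX_{2i+1}V^\dag$. Since $X\otimes X$ commutes with the Heisenberg generator $XX+YY+ZZ$ it commutes with $V$, so the two factors multiply to the single Pauli $VX_{2i}X_{2i+1}V^\dag=X_{2i}X_{2i+1}$ (and $Z_{2i}Z_{2i+1}$ in the $ZZ$ case). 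This is the only place the specific gate enters.

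Next I would fix the dictionary between pairs and settings. Each selected singlet $i_r$ supports its stabilizer on the two pairs $\ell_r=(2i_r-2,2i_r-1)$ and $\rho_r=(2i_r,2i_r+1)$, and on such a pair the factor $X\mathbb I+\mathbb I X\mp YZ\pm ZY$ is covered exactly by the three LMSs $\{XX,YZ,ZY\}$, which are moreover forced ($YZ,ZY$ admit no substitute, and the two weight-one terms force a third setting, necessarily $XX$). Adjacent selected singlets share $\rho_r=\ell_{r+1}$, so a maximal run of $m$ consecutive singlets occupies $m+1$ distinct pairs; summing over runs gives a total of $P=k+\#\{\text{runs}\}=2k-a$ pairs, where $a$ is the number of adjacent selected pairs, and an elementary count shows $2k-a=\sum_r\min(d_r,2)$ under the periodic convention $i_0=i_k-N$. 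Thus the claimed exponent is exactly the number of support pairs.

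Then I would expand $\prod_r\tilde P^{(i_r)}=2^{-k}\sum_{T\subseteq[k]}\prod_{r\in T}\tilde S^{(i_r)}$ and read each subset term off pair by pair: a pair carrying a single $L/R$ factor needs all three settings, while a pair shared by two $T$-adjacent singlets collapses, by the first step, to the single setting $XX$. The upper bound is then immediate, since the full grid of $3^P$ joint assignments of $\{XX,YZ,ZY\}$ to the $P$ pairs covers every term. For the matching lower bound I would exhibit, in the generic configuration, a single subset $T$ whose product carries one uncollapsed factor on every pair and therefore forces all $3^P$ settings: for isolated singlets take $T=[k]$, and inside a run take the alternate (``domino'') singlets so that each pair is touched exactly once. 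The degenerate case $k=2$ with the two singlets adjacent is genuinely different and I would compute it directly: the only term linking the outer pairs $\ell_1$ and $\rho_2$ is the collapsed product $\tilde S^{(i_1)}\tilde S^{(i_2)}$, which lives solely in the slice where the middle pair $\rho_1=\ell_2$ is set to $XX$; that slice then needs all $9$ settings of $(\ell_1,\rho_2)$ while the two slices fixing $\rho_1=\ell_2$ to $YZ$ or $ZY$ need only $3$ each, giving $9+3+3=15$ rather than $3^3$. The $ZZ$ case is identical after $X\leftrightarrow Z$.

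The main obstacle is the lower bound, i.e.\ showing no smaller LMS set suffices. The clean ``single forcing term'' argument handles isolated singlets and odd-length runs, but for even-length runs no subset touches every pair exactly once, so one must instead argue by slicing: fix the setting on one internal pair and show each of its three slices still inherits a full lower-order forcing term, so that the three slices contribute disjointly and multiply the count by three. Controlling this slicing carefully — and verifying that it fails precisely in the isolated $k=2$ configuration, which is what opens the $15$-versus-$27$ gap treated by hand — is the delicate step; everything else reduces to bookkeeping on the Pauli supports fixed by the first two steps.
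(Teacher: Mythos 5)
Your proposal is correct and follows essentially the same route as the paper's proof: the same collapse identity $U X_{j+1}X_{j+2}U^\dag = X_{j+1}X_{j+2}$ on the shared pair, the same count of three settings $\{XX,YZ,ZY\}$ per support pair yielding the exponent $\sum_{r}\min(i_r-i_{r-1},2)$, and the same alternating (``domino'') selection of stabilizers within runs to force the lower bound --- this is exactly the paper's construction of $s(a_1\cdots a_k)$ with $a_j=0$ whenever $a_{j-1}=1$ and $i_j=i_{j-1}+1$. Where you differ you are actually more explicit than the paper: your slicing argument for even-length runs and your $9+3+3=15$ count for the adjacent $k=2$ case spell out steps the paper compresses into ``it is not difficult to see,'' the paper's augmentation of $\mathcal{S}'$ by the terms $\tilde{S}^{(i_j)}$ and $\tilde{S}^{(i_{j-1})}\tilde{S}^{(i_j)}$ for $a_j=0$ playing the same role as your slices.
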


\begin{proof}
We only prove the case of $\{\tilde{P}^{(i)}\}_{XX}$ and the same argument applies for $\{\tilde{P}^{(i)}\}_{ZZ}$. We omit the subscript $XX$ for simplicity. We first consider a single projector $\tilde{P}^{(i)} = (\tilde{S}^{(i)} + \mathbb{I}) / 2$, whose LMC is explicitly determined by $\tilde{S}^{(i)}$. As shown in Eq.~\eqref{eq:U_transform_stabilizers}, there are 16 Pauli combinations in $\tilde{P}^{(i)}$. Let $j=2i-1$, a good point is that only one LMS $X^{(j - 1)}X^{(j)}$ is needed for measuring $X_{j-1}I_j$ and $I_{j-1}X_j$. Thus we only need nine LMSs to estimate the projector $\tilde{P}^{(i)}$, denoted by $\{O^{(i)}_1 \otimes O^{(i)}_2\}$, where
\begin{equation}\label{eq:O1O2}
\begin{split}
&O^{(i)}_1 \in \{X^{(j - 1)}X^{(j)}, Y^{(j - 1)}Z^{(j)}, Z^{(j - 1)}Y^{(j)}\}, \\
&O^{(i)}_2 \in \{X^{(j + 1)}X^{(j + 2)}, Y^{(j + 1)}Z^{(j + 2)}, Z^{(j + 1)}Y^{(j + 2)}\}.
\end{split}
\end{equation}
Then, consider a projector product term $\tilde{P}^{(i)}\tilde{P}^{(i + 1)}$, the corresponding LMC is determined by $\tilde{S}^{(i)}, \tilde{S}^{(i + 1)}$ and $\tilde{S}^{(i)}\tilde{S}^{(i + 1)}$. Consistent with Eq.~\eqref{eq:O1O2}, the LMSs for measuring $\tilde{S}^{(i)}, \tilde{S}^{(i + 1)}$ can be written as $\{O^{(i)}_1 \otimes O^{(i)}_2\}$ and $\{O^{(i+1)}_1 \otimes O^{(i+1)}_2\}$. Due to the identical relation,
\begin{equation}
U X^{(j + 1)}X^{(j+2)} U^\dag = X^{(j + 1)}X^{(j+2)},
\end{equation}
where $U$ is the coupling operation discussed in Sec.~\ref{Sec:general_stabilizer}, the LMSs for measuring $\tilde{S}^{(i)}\tilde{S}^{(i + 1)}$ are $\{O^{(i)}_1 \otimes X^{(j+1)}X^{(j+2)}\otimes \mathcal{O}^{(i+1)}_2\}$. Thus the total LMC for $\tilde{P}^{(i)}\tilde{P}^{(i + 1)}$ is 15.
\\
Finally we consider the projector product term $\tilde{P} = \tilde{P}^{(i_1)} \cdots \tilde{P}^{(i_k)}$ [$k \geq 2$, $(i_1-i_2) \;\mathrm{mod} \; N \neq \pm1$]. Decomposing $\tilde{P}$ into the sum of stabilizers, one can find the related LMSs are determined by the stabilizer set,
\begin{equation}
\mathcal{S} = \{s(a_1 \cdots a_k) | s(a_1 \cdots a_k) = (\tilde{S}^{(i_1)})^{a_1} \cdots (\tilde{S}^{(i_k)})^{a_k}, a_j \in \{0, 1\}, 1 \leq j \leq k\}.
\end{equation}
We now prove the LMSs $\mathcal{L} = \{O^{(i_1)}_1 \otimes O^{(i_1)}_2 \otimes \cdots \otimes O^{(i_k)}_1 \otimes O^{(i_k)}_2\}$ are necessary and sufficient for measuring $\tilde{P}$. The LMSs for measuring each stabilizer $\tilde{S}^{(i_j)}$ are $\{O^{(i_j)}_1 \otimes O^{(i_j)}_2\}$, thus, $\mathcal{L}$ are sufficient for all the elements in $\mathcal{S}$, i.e., LMS$(S) \subseteq \mathcal{L}$. Then, we prove the necessity of $\mathcal{L}$. One can always construct a subset $\mathcal{S}' \subseteq \mathcal{S}$, where the LMSs of $\mathcal{S}'$ cover all the elements of $\mathcal{L}$. The construction runs as follows. First set $a_1 = 1$ and track the order $j$ from 2 to $k$: If $a_{j-1}=1$ and $i_j = i_{j - 1} + 1$, set $a_j = 0$; otherwise, set $a_j = 1$. Then one obtain a specific stabilizer product $s(a_1 \cdots a_k)$. Second collect the single stabilizer terms $\tilde{S}^{(i_j)}$ with $a_j = 0$. The subset is constructed as
\begin{equation}
\mathcal{S}' = \{s(a_1 \cdots a_k)\} \cup \{\tilde{S}^{(i_j)} | a_j = 0 \} \cup \{\tilde{S}^{(i_{j-1})} \tilde{S}^{(i_j)} | a_j = 0 \}.
\end{equation}
It is not difficult to see that LMS$(\mathcal{S}')$ is equivalent to  $\mathcal{L}$. Since $\mathcal{S}' \subseteq \mathcal{S}$, one can conclude that $\mathcal{L} \subseteq$ LMS$(\mathcal{S})$. Thus, there completes the proof of  $\mathcal{L} = LMS(\mathcal{S})$. The LMC for $\tilde{P}$ is determined by $|\mathcal{L}| = 3^{\sum_{r = 1}^k \min(i_r - i_{r - 1}, 2)}$.
\end{proof}

Here we formalize Proposition \ref{prop:projector_subset} with the periodic boundary condition,  i.e., the first and last qubit can be coupled by the entangling gate. One can directly generalize the result here to open boundary condition by taking $i_0 = i_1 - 2$ in Proposition \ref{prop:projector_subset}.

Clearly from Proposition \ref{prop:projector_subset}, we can obtain the EW operator with the minimal LMC by replacing the complicated terms $\prod_i^N \tilde{P}^{(i)}_{XX}$ and $\prod_i^N \tilde{P}^{(i)}_{ZZ}$ by a sum of single projector terms, i.e., $\mathcal{W}^{''}_\Psi = (2N - \frac{3}{8})\mathbb{I} - \sum_i^N (\tilde{P}^{(i)}_{XX} + \tilde{P}^{(i)}_{ZZ})$. Thanks to the translation invariant symmetric structure of the target state, still only 9 LMSs are required to measure the summation term $\sum_i^N \tilde{P}^{(i)}_{XX}$. One can periodically select one of the three Pauli tensors for one LMS, i.e. $O^{(1)}_1 = \cdots = O^{(N)}_1, O^{(1)}_2 = \cdots = O^{(N)}_2$. The same is true for the projector $\tilde{P}^{(i)}_{ZZ}$. Hence, one needs 18 LMSs in total to measure the EW operator $\mathcal{W}^{''}_\Psi$.

It seems we obtain the most efficient EW operator present above. However, in real experiments, the quantum noises exist and may affect the detection performances. The more decompositions we perform, the more sensitive the EW operators will be to quantum noises. Actually, there exists a trade-off between the detection robustness and the experimental efficiency \cite{Zhao2019Efficient}. In the following section, we will design a searching algorithm for constructing the optimal EW operator with the maximal detection robustness. Afterwards, we can always find an appropriate entanglement detection strategy under any experimental condition.

In practice, the generation of the target state is suffering from non-negligible noises, which can affect the performance of entanglement detection. For simplicity, we treat the prepared noisy state as the target state mixed with the white noise
\begin{equation}\label{eq:noisy_target_state}
\rho_\Psi = (1 - p)\ket{\Psi}\bra{\Psi} + \frac{p}{2^{2N}}\mathbb{I}.\end{equation}
A valid EW operator for $\rho_\Psi$ must obey the criterion in Eq.~\eqref{eq:EW_criterion}, thus, there always exists an upper bound for $p$, denoted as $p_{\mathrm{max}}$, indicating the maximum tolerable noise error rate. We can treat $p_{\mathrm{max}}$ as an essential figure of merit for the detection robustness. Applying the EW operator $\mathcal{W}'_\Psi$ defined in Eq.~\eqref{eq:XZ_EW}, one can calculate,
\begin{equation}\label{eq:maxtolerablep}
p_{\mathrm{max}} = \frac{3}{16(1 - 2^{-N})},
\end{equation}
which approaches $3 / 16$ when $N \rightarrow \infty$. The product projectors $\prod_i^N \tilde{P}^{(i)}_{XX}$ and $\prod_i^N \tilde{P}^{(i)}_{ZZ}$ in $\mathcal{W}'_\Psi$ can be further decomposed to decrease the total LMC for GME detection. Meanwhile, the detection robustness $p_{\mathrm{max}}$ is changed as the EW operator changes.

To be specific, we will give an example to show how the formulations of EW operators affect $p_{\mathrm{max}}$. Recall that no overlap exists between the LMSs for $\{\tilde{P}^{(i)}_{XX}\}$ and $\{\tilde{P}^{(i)}_{ZZ}\}$. The decomposition of $\prod_i^N \tilde{P}^{(i)}_{XX}$ works for $\prod_i^N \tilde{P}^{(i)}_{ZZ}$ as well, thus, we will take the same decomposition method for these two product projectors. Set $N = 5$ without periodic boundary condition, a ten-qubit target state is obtained. Now fix the LMC for measuring the product projector $\tilde{P}_\gamma = \tilde{P}^{(1)}_\gamma\cdots \tilde{P}^{(5)}_\gamma$ to $3^4, \gamma \in \{XX, ZZ\}$, a direct application of Proposition \ref{prop:P1P2} is:
\begin{equation}\label{eq:ineq_1}
\tilde{P}_\gamma \geq \tilde{P}^{(1)}_\gamma\tilde{P}^{(2)}_\gamma\tilde{P}^{(3)}_\gamma + \tilde{P}^{(4)}_\gamma\tilde{P}^{(5)}_\gamma - \mathbb{I},
\end{equation}
then we obtain a new EW operator
\begin{equation}
\mathcal{W}^1_\Psi = \frac{29}{8}\mathbb{I} - \sum\limits_{\gamma\in\{XX, ZZ\}} \left( \tilde{P}^{(1)}_\gamma\tilde{P}^{(2)}_\gamma\tilde{P}^{(3)}_\gamma + \tilde{P}^{(4)}_\gamma\tilde{P}^{(5)}_\gamma \right),
\end{equation}
and its corresponding  maximum tolerable noise error rate is $p_{\mathrm{max}} = 11.5\%$. Another nontrivial inequality of the projectors is $(\mathbb{I} - \tilde{P}^{(1)}_\gamma\tilde{P}^{(2)}_\gamma)\tilde{P}^{(3)}_\gamma(\mathbb{I} - \tilde{P}^{(4)}_\gamma\tilde{P}^{(5)}_\gamma) \geq 0$, i.e.,
\begin{equation}\label{eq:ineq_2}
\tilde{P}_\gamma \geq \tilde{P}^{(1)}_\gamma\tilde{P}^{(2)}_\gamma\tilde{P}^{(3)}_\gamma + \tilde{P}^{(3)}_\gamma\tilde{P}^{(4)}_\gamma\tilde{P}^{(5)}_\gamma - \tilde{P}^{(3)}_\gamma.
\end{equation}
Hence we derive another EW operator
\begin{equation}\label{eq:EW2}
\mathcal{W}^2_\Psi = \frac{13}{8}\mathbb{I} - \sum\limits_{\gamma\in\{XX, ZZ\}} \left( \tilde{P}^{(1)}_\gamma\tilde{P}^{(2)}_\gamma\tilde{P}^{(3)}_\gamma + \tilde{P}^{(3)}_\gamma\tilde{P}^{(4)}_\gamma\tilde{P}^{(5)}_\gamma - \tilde{P}^{(3)}_\gamma \right),
\end{equation}
the  maximum tolerable noise error rate for $\mathcal{W}^2_\Psi$ is $p_{\mathrm{max}} = 15.0\%$. Although $\mathcal{W}^1_\Psi$ and $\mathcal{W}^2_\Psi$ have the same LMC, $\mathcal{W}^2_\Psi$ is more robust to quantum noises in real experiments. Actually, for a given value of LMC, there are a variety of the EW operators. One should choose one with optimal detection robustness.

\section{Algorithm for the constructing optimal witness}\label{Sec:algorithm}
In this section, we propose and apply a searching algorithm to construct the optimal witness under different noise levels. As shown in the previous example, optimizing the noise tolerance can be reduced to constructing the optimal positive operator $Q$ in the bounding decomposition method. In the following, we will describe our algorithm in a general version.

Given a set of projectors $\{P_1, \cdots, P_n\}$, assume the bounding decomposition inequality can be written as
\begin{equation}\label{eq:decompo_ineq}
P_1 P_2 \cdots P_n \geq P_d \equiv \sum\limits_{j}c_jP[a_j],
\end{equation}
where $a_j \in \mathbb{Z}_2^n, P[a_j] = P_1^{a_j(1)}P_2^{a_j(2)} \cdots P_n^{a_j(n)}$; the tuple $(c_j, P[a_j])$ consisting of the coefficient and product of projectors is the solution to be solved in the algorithm. Take the inequality in Eq.~\eqref{eq:ineq_1} for example, the solution tuples are $(1, \tilde{P}^{(1)}\tilde{P}^{(2)}\tilde{P}^{(3)}), (1, \tilde{P}^{(4)}\tilde{P}^{(5)}), (-1, \mathbb{I})$.

Based on the inequality of Eq.~\eqref{eq:decompo_ineq}, we can choose the positive operator $Q$ as
\begin{equation}
Q = \ket{\psi}\bra{\psi} - \sum\limits_{j}c_jP[a_j],
\end{equation}
here each $P[a_j]$ is constructed from the stabilizer projectors of Eq.~\eqref{eq:U_transform_stabilizers}. Note that a valid EW operator must obey the criterion in Eq.~\eqref{eq:EW_criterion}. Consider the existence of noise in real experiments [see Eq.~\eqref{eq:noisy_target_state}], we have
\begin{equation}\label{eq22}
\begin{split}
&\tr(P_d \rho_\Psi) = \frac{p}{2^{2N}}\sum\limits_{j \geq 1} c_j2^{2N-s[a_j]} + (1-p)\sum\limits_{j \geq 1}c_j + c_0 \geq \alpha, \\
&\mathrm{i.e.,\ } p \sum\limits_{j \geq 1}c_j(1-2^{-s[a_j]})\leq \sum\limits_{j \geq 0}c_j - \alpha \leq 1 - \alpha,
\end{split}
\end{equation}
where $c_0$ is related to $P[a_0] = \mathbb{I}$, and $s[a_j] = \sum_i^n a_j(i)$ is the Hamming weight of the $a_j$ vector. 
We derive the second inequality using the fact that $\sum_{j\geq 0} c_j \leq 1$, due to the positivity of $Q$. Explicitly, the detection robustness can be optimized by
\begin{equation}\label{eq:pmax}
p_{\mathrm{max}} = \frac{1 - \alpha}{\min \sum\limits_{j \geq 1}c_j(1-2^{-s[a_j]})}.
\end{equation}
Here the minimization is over all possible witness operators. For our target state, the detailed construction of the witness subset is shown in Proposition \ref{prop:projector_subset}. It is not hard to see that $2^{-s[a_j]}$ decreases quickly as $s[a_j]$ grows, indicating that the value of $\sum_{j \geq 1}c_j$ makes the main contribution to $p_{\mathrm{max}}$. As a result, we  divide the minimize procedure further into two components: (a) Minimize $\sum_{j \geq 1}c_j$; (b) minimize $s[a_j]$ for each $j$.

We begin by describing the minimization in (a). According to the fact that $\sum_{j \geq 1}c_j \leq 1 - c_0$, the problem of minimizing $\sum_{j \geq 1}c_j$ can be reduced to maximizing $c_0$. The value of $c_0$ denotes the weight of $\mathbb{I}$ appearing in the bounding decomposition inequality of Eq.~\eqref{eq:decompo_ineq}. Intuitively, this quantity can be seen as the number of decompositions. Thus we should avoid unnecessary divisions for the original state projector. For a given set of witness operators $\{P[a_j]\}$, we can extract the greatest common divisor projector to ensure the efficient decomposition. Assume the set $\{P[a_j]\}_{j \geq 1}$ in Eq.~\eqref{eq:decompo_ineq} has a common divisor projector $P_{cd}$, the bounding decomposition inequality can be written as
\begin{equation}\label{eq:decompo_ineq2}
\begin{split}
P_1 P_2 \cdots P_n  &\equiv P_{cd}\bar{P}_{cd} \\
&\geq P_{cd} \left(\sum_{j \geq 1} c'_j P[a'_j] + c'_0 \mathbb{I}\right).
\end{split}
\end{equation}
where $\bar{P}_{cd}$ is the complementary projector of $P_{cd}$, and $\bar{P}_{cd}$ itself can be further bounded by $\bar{P}_{cd} \geq \sum_{j \geq 1} c'_j P[a'_j] + c'_0 \mathbb{I}$. Then the inequality can be improved by
\begin{equation}\label{eq25}
P_1 P_2 \cdots P_n \geq \sum_{j \geq 1} c'_j P[a_j] + c'_0 P_{cd},
\end{equation}
where $P[a_j] = {P}_{cd}P[a'_j]$, and the original $c_0 = 0$ in Eq.~\eqref{eq25}. Since $c_0 \leq 0$, $0$ is the maximum value of $c_0$ if $P_{cd} \neq \mathbb{I}$. Note that one can repeat the above procedure for the complementary projector and solve the greatest common divisor in each iteration until $P_{cd} = \mathbb{I}$. After that, one can use Procedure (b) to minimize $\sum_{j \geq 1}c_j[1-2^{-s[a_j]}]$, as described below.

We now consider the minimization in (b). In order to minimize $s[a_j]$ for each $j$, i.e., reduce the number of product terms, we introduce a numerical truncation method. Recall that $P[a_j] = P_1^{a_j(1)}P_2^{a_j(2)} \cdots P_n^{a_j(n)}$ and denote $\mathrm{Supp}(a_j)$ as the Hilbert space involved by $P[a_j]$. One can remove a divisor projector $P[b_j]$ from $P[a_j]$ if the following two requirements are satisfied after removing it:
\begin{enumerate}
\item The common divisor projector $P_{cd}$ solved in Procedure (a) remains unchanged;

\item $\bigcup_{j \geq 1} \mathrm{Supp}(P[a'_j]) = \mathrm{Supp}(P_1P_2 \cdots P_n)$, where $P[a'_j] = P[a_j] / P[b_j]$.
\end{enumerate}
One can traverse all the divisor projectors in $P[a_j]$ and make the truncation procedure for all $j$'s. Then, a minimum $s[a_j]$ will be obtained.

\begin{observation}
A tight bounding decomposition inequality can be written as
\begin{align}
    \prod\limits_{P_i\in\mathcal{I}_1}P_i \cdot \prod\limits_{P_j\in\mathcal{I}_0}(\mathbb{I}-P_j) \geq 0,
\end{align}
where $\mathcal{I}_0\sqcup\mathcal{I}_1=\text{Supp}(P_1P_2 \cdots P_n)$ denotes the Hilbert space spanned by $P_1P_2 \cdots P_n$. \end{observation}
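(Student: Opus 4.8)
The plan is to exploit the mutual commutativity of the stabilizer projectors, which collapses the operator inequality into a statement about scalar eigenvalues. The projectors $\{P_i\}$ appearing here are the stabilizer projectors $\tilde P^{(i)}$ of Eq.~\eqref{eq:stabilizer_projector}, or grouped products thereof, built from the generalized stabilizers $\tilde S^{(i)}=US^{(i)}U^\dag$. Because the underlying singlet stabilizers $\{-XX,-ZZ\}$ act on disjoint qubit pairs and commute within each pair, and conjugation by the common unitary $U$ preserves commutation, all the $P_i$ commute pairwise. I would therefore begin by fixing a common eigenbasis, labelling each joint eigenstate by a bit string $b$ with $b_i\in\{0,1\}$ the eigenvalue of $P_i$; the target state $\ket{\Psi}$ then occupies the all-ones sector $b=\mathbf{1}$.

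Next I would establish validity. In this eigenbasis the operator $R\equiv\prod_{i\in\mathcal{I}_1}P_i\prod_{j\in\mathcal{I}_0}(\mathbb{I}-P_j)$ acts on each eigenstate $b$ by the scalar $\prod_{i\in\mathcal{I}_1}b_i\prod_{j\in\mathcal{I}_0}(1-b_j)\in\{0,1\}$, so $R$ is itself an orthogonal projector and hence $R\geq 0$. Expanding the $\mathcal{I}_0$ product and isolating the full term $P_1\cdots P_n$ rewrites $R\geq0$ as a bounding decomposition inequality of the form of Eq.~\eqref{eq:decompo_ineq}, with gap operator $P_1\cdots P_n-P_d=R$. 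This already shows the displayed inequality is a genuine bounding decomposition.

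For tightness I would argue in the same eigenbasis. Since $(\mathbb{I}-P_j)\ket{\Psi}=0$ whenever $\mathcal{I}_0\neq\emptyset$, one has $R\ket{\Psi}=0$, so the bound is saturated on the target state, $\bra{\Psi}P_d\ket{\Psi}=1$; this is exactly the condition that the witness detects $\ket{\Psi}$ with no loss of fidelity. Moreover $R$ is supported only on the single sector $b^\ast=\mathbf{1}_{\mathcal{I}_1}$, the smallest possible nonzero support for a gap that vanishes on $b=\mathbf{1}$. I would then identify this partition with the output of the two minimization procedures: Procedure (a) extracts the greatest common divisor and fixes the common factor $\prod_{i\in\mathcal{I}_1}P_i=P_{cd}$, while Procedure (b) truncates the remaining divisor projectors subject to the support-covering constraint, forcing the leftover factors into the complementary form $\prod_{j\in\mathcal{I}_0}(\mathbb{I}-P_j)$. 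Checking that this choice minimizes $\sum_{j\geq1}c_j(1-2^{-s[a_j]})$ of Eq.~\eqref{eq:pmax} certifies that it attains the optimal noise tolerance.

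The main obstacle is tightness rather than validity. The hard part is ruling out decompositions whose gap is not a single projector: I must show that any valid $P_d$ saturating the bound on $b=\mathbf{1}$ and measurable with the fewest LMSs necessarily has gap of the stated product form. The eigenbasis again helps, since every admissible gap $R'=P_1\cdots P_n-P_d\geq0$ must vanish on $b=\mathbf{1}$, and minimality of the Hamming weights $s[a_j]$ after common-divisor extraction pins $R'$ to a minimal-support projector. Care is needed because the sub-objectives of Procedures (a) and (b) are optimized sequentially rather than jointly, so I would verify that the weight-dependent penalties $2^{-s[a_j]}$ do not upset the optimality obtained by maximizing the coefficient of $\mathbb{I}$.
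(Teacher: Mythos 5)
The paper states this Observation without any proof, so there is nothing to compare against line by line; your argument is, as far as I can tell, the justification the authors had in mind, and its core is sound. Because the generalized stabilizers are unitary conjugates of commuting Pauli stabilizers, all the projectors involved (and any grouped products of them) commute, so $R=\prod_{P_i\in\mathcal{I}_1}P_i\prod_{P_j\in\mathcal{I}_0}(\mathbb{I}-P_j)$ is a product of commuting projectors, hence itself a projector and positive semidefinite; and since $P_j\ket{\Psi}=\ket{\Psi}$ for every stabilizer projector, $R\ket{\Psi}=0$ whenever $\mathcal{I}_0\neq\emptyset$, so the resulting inequality is saturated on the target state. That establishes everything the Observation literally asserts, namely the existence of a tight inequality of this product form.

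Two points need attention. First, there is a sign issue you gloss over when you write ``gap operator $P_1\cdots P_n-P_d=R$'': expanding $\prod_{j\in\mathcal{I}_0}(\mathbb{I}-P_j)$ produces the full product $P_1\cdots P_n$ with coefficient $(-1)^{|\mathcal{I}_0|}$, so $R\geq 0$ rearranges to a \emph{lower} bound of the form of Eq.~\eqref{eq:decompo_ineq} only when $|\mathcal{I}_0|$ is even; for odd $|\mathcal{I}_0|$ it yields an upper bound on $P_1\cdots P_n$, which is useless for the witness construction. (Both of the paper's worked instances, Eqs.~\eqref{eq:ineq_1} and \eqref{eq:ineq_2}, have $|\mathcal{I}_0|=2$.) You should state this parity condition explicitly. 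Second, your final paragraph --- that every optimal decomposition output by Procedures (a) and (b), minimizing $\sum_{j\geq1}c_j(1-2^{-s[a_j]})$ in Eq.~\eqref{eq:pmax}, necessarily has a gap of this single-projector form --- is a plan rather than a proof, and the sequential rather than joint optimization of the two sub-objectives makes it genuinely nontrivial. Fortunately the Observation does not claim that much: it only asserts that a tight inequality \emph{can} be written in this form, which your positivity-plus-saturation argument already delivers. I would drop or clearly demote the unproven optimality discussion rather than leave it looking like part of the proof.
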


To summarize, for a target state projector $\ket{\psi}\bra{\psi}$ and its related witness subset $\{P[a_j]\}$, one can first apply Procedure (a) to find out the greatest common divisor projector, and then apply Procedure (b) to minimize $s[a_j]$ for all $j$'s. Finally one will obtain an EW operator with optimal detection robustness. The detailed algorithm for the optimization procedures is shown in Algorithm 1.

\begin{centering}
\begin{algorithm}[H]\label{algo1}
\SetAlgoLined
\LinesNumbered
\KwInput{An EW operator: $\mathcal{W} = \alpha \mathbb{I} - \ket{\psi}\bra{\psi}$, value of LMC: $l$.}
\KwOutput{An EW operator with optimal detect robustness and at most $l$ LMC: $\mathcal{W}_{\mathrm{opt}}$.}
Construct a set of orthonormal basis $\{\ket{e_k}\}_0^{2^n-1}$ of $\mathcal{H}_n$ with $\ket{e_0}=\ket{\psi}$\;
Denote $U=\sum\limits_{k=0}^{2^n-1}\ket{e_k}\bra{k}$ and set $P_i=UZ_iU^\dagger$, $\forall i\in[n]$\;
Construct $\mathcal{T}^* = \{q_j\}$, where $q_j$ is the product of elements in $\{P_i\}_1^n$ with $\mathrm{LMC}(q_j) \leq l, \forall j$ and $\text{Supp}(\mathcal{T}^*)=\text{Supp}(\prod\limits_{i=1}^nP_i)$\;

$P_{cd}\leftarrow\mathbb{I}$\;

\For{$i$ \text{from} 1 \text{to} $n$}{
\If{$P_i$ is a divisor of all the elements in $\mathcal{T}^*$}
{
    $P_{cd}\leftarrow P_{cd}\cdot P_i$;}
}

Truncate terms in $\mathcal{T^*}$\;

\While{no further truncation could be down}{
   Starting from the term with the highest degree in $\mathcal{T^*}$, remove redundant sub-projectors;
 }

Return $\mathcal{W}_{\mathrm{opt}} = \alpha\mathbb{I} - \sum\limits_{q_j\in\mathcal{T^*}} q_j + (|\mathcal{T^*}|-1) P_{cd}$.\label{line14}
\caption{Searching algorithm for constructing optimal EW operators.}
\end{algorithm}
\end{centering}


We show a concrete example to explain how our algorithm works. For a ten-qubit target state, we make the same bounding decompositions for $\prod_i^N \tilde{P}^{(i)}_{XX}$ and $\prod_i^N \tilde{P}^{(i)}_{ZZ}$. As for the product projector $\prod_i^N \tilde{P}^{(i)}_{\gamma}, \gamma \in \{XX, ZZ\}$, the witness set with LMC $l \leq 27$ is
\begin{equation}
\begin{split}
\mathcal{T}_\gamma = \{\tilde{P}^{(1)}_\gamma\tilde{P}^{(2)}_\gamma\tilde{P}^{(3)}_\gamma, \tilde{P}^{(3)}_\gamma \tilde{P}^{(4)}_\gamma \tilde{P}^{(5)}_\gamma, \tilde{P}^{(1)}_\gamma \tilde{P}^{(2)}_\gamma, \tilde{P}^{(2)}_\gamma \tilde{P}^{(3)}_\gamma, \\ \tilde{P}^{(3)}_\gamma \tilde{P}^{(4)}_\gamma, \tilde{P}^{(4)}_\gamma \tilde{P}^{(5)}_\gamma, \tilde{P}^{(1)}_\gamma \tilde{P}^{(5)}_\gamma , \tilde{P}^{(1)}_\gamma ,\tilde{P}^{(2)}_\gamma ,\tilde{P}^{(3)}_\gamma ,\tilde{P}^{(4)}_\gamma ,\tilde{P}^{(5)}_\gamma\}.
\end{split}
\end{equation}
Make a trial and choose
\begin{align}
    \mathcal{T^*}=\{\tilde{P}^{(1)}_\gamma\tilde{P}^{(2)}_\gamma\tilde{P}^{(3)}_\gamma, \tilde{P}^{(2)}_\gamma\tilde{P}^{(3)}_\gamma, \tilde{P}^{(3)}_\gamma\tilde{P}^{(4)}_\gamma, \tilde{P}^{(3)}_\gamma\tilde{P}^{(4)}_\gamma\tilde{P}^{(5)}_\gamma\},
\end{align}
where the greatest common divisor is $P_{\mathrm{cd}} = \tilde{P}^{(3)}_\gamma$. We can remove $\tilde{P}^{(2)}_\gamma\tilde{P}^{(3)}_\gamma$ and $\tilde{P}^{(3)}_\gamma\tilde{P}^{(4)}_\gamma$ to decrease $|\mathcal{T^*}|$ by 2 whereas keeping $P_{\mathrm{cd}} = \tilde{P}^{(3)}_\gamma$ unchanged. No more truncation can be performed, and we finally obtain $\mathcal{T^*}=\{\tilde{P}^{(1)}_\gamma\tilde{P}^{(2)}_\gamma\tilde{P}^{(3)}_\gamma, \tilde{P}^{(3)}_\gamma\tilde{P}^{(4)}_\gamma\tilde{P}^{(5)}_\gamma\}$.  Thus we can construct the same inequality as Eq.~\eqref{eq:ineq_2} and compute the optimal EW operator of Eq.~\eqref{eq:EW2}.

\section{Numerical simulation}\label{Sec:simulation}
In this section, we apply our searching algorithm to the target state and construct optimal EW operators under different noise levels.

Recall that the LMSs for each stabilizers are formulated by $O_1 \otimes O_2$, where $O_1, O_2 \in \{XX, YZ, ZY\}$. Different product combinations of projectors require different periodic selections of Pauli tensors in $O_1$ and $O_2$, which decide the corresponding LMSs. Since the LMSs are determined when we fix the LMC, one thing we should do is to find the optimal EW operator with the maximal noise tolerance $p_{\mathrm{max}}$. In the following, we use $F_{\mathrm{min}} = 1 - p_{\mathrm{max}}$ as the figure of merit, which is nearly the minimum requirement for the experimental fidelity. As for the target state defined in Eq.~\eqref{eq:target_state}, we set the number of pairs $N = 8, 10, 15, 20$ and apply our searching algorithm to find the optimal EW operator with maximal noise tolerance. Part of the numerical results are shown in Table \ref{tab:result1}.

\begin{table*}[htbp!]
\caption{Optimal $F_{\mathrm{min}}$ under different LMCs for target states of Eq.~\eqref{eq:target_state} with different numbers of pairs $N$. When $N$ is large the tolerable fidelity is given by $1-p_{max}\approx13/16=81.25\%$ in Eq.~\eqref{eq:maxtolerablep}.}
\label{tab:result1}
\centering
\begin{tabular}{p{35pt}p{35pt}p{32pt}p{32pt}p{32pt}p{32pt}p{32pt}p{32pt}}
\hline
\hline
$N=8$ & LMC & 18 & 54 & 162 & 486 & 1458 & 4374 \\
& $F_{\mathrm{min}}$ & 94.6\% & 92.5\% & 90.0\% & 84.2\% & 81.8\% & 81.2\%\\
\hline
$N=10$ & LMC & 18 & 54 & 162 & 486 & 1458 & 4374 \\
& $F_{\mathrm{min}}$ & 95.8\% & 94.2\% & 92.8\% & 90.3\% & 84.6\% & 82.0\%\\
\hline
$N=15$ & LMC & $2\times 3^{2}$ & $2\times 3^{4}$ & $2\times 3^{6}$ & $2\times 3^{8}$ & $2\times 3^{10}$ & $2\times 3^{12}$ \\
& $F_{\mathrm{min}}$ & 97.3\% & 95.5\% & 93.4\% & 87.4\% & 81.8\% & 81.3\%\\
\hline
$N=20$ & LMC & $2\times 3^{2}$ & $2\times 3^{5}$ & $2\times 3^{8}$ & $2\times 3^{11}$ & $2\times 3^{14}$ & $2\times 3^{17}$ \\
& $F_{\mathrm{min}}$ & 98.0\% & 95.9\% & 93.6\% & 85.0\% & 81.3\% & 81.3\%\\
\hline
\hline
\end{tabular}
\end{table*}

Furthermore, we show the changes in the relationship between LMC and $F_\mathrm{min}$ in Fig.~\ref{Fig:result}. As shown in the figure, the larger LMC always brings better noise tolerance, i.e., the lower requirements for the experimental fidelity. However, when the LMC reaches a value that is large enough, the noise tolerance would not improve a lot anymore. For instance, for the $N = 10$ target state one can find when the LMC is larger than 1458, $F_{\mathrm{min}}$ only decreases a little. In this case, one may not be worth paying a lot of experimental resources for minor detection improvements. In general, there exists a trade-off between detection robustness and experimental efficiency. One can always find an appropriate and efficient EW operator under different noise conditions.

\begin{figure}[htbp!]
\centering
\includegraphics[width=0.5\textwidth]{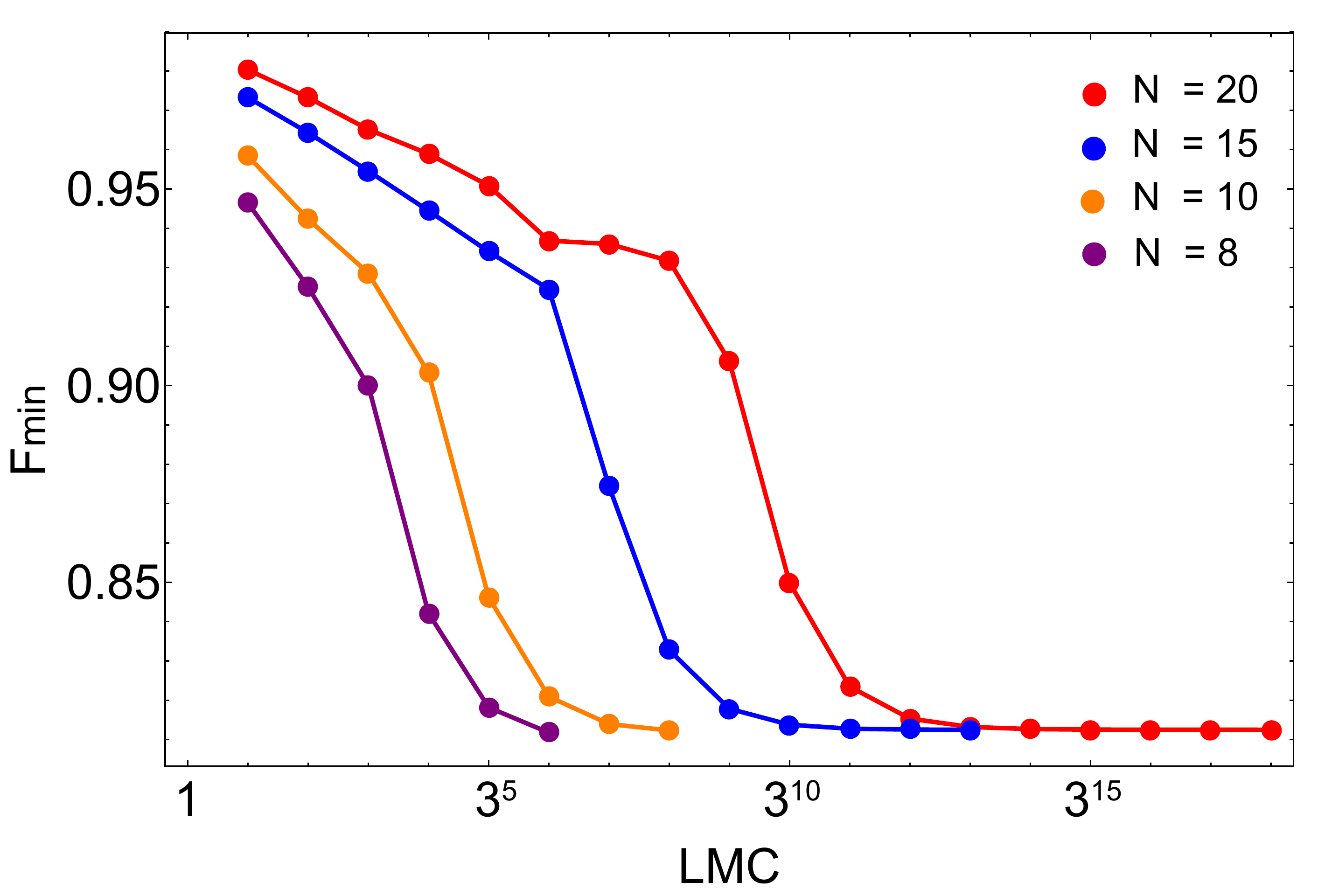}
\caption{Optimal EWs obtained by the numerical searching algorithm for $N = 8, 10, 15, 20$ entangled states. The minimum requirements for the experimental fidelity under different cases are plotted versus LMC. Note that $F_{\mathrm{min}}$ decreases more and more slowly when LMC increases, and finally reaches the lower bound $1 - p_{max}$ [see Eq. ~\eqref{eq:maxtolerablep}], which approaches $81.25\%$ when $N$ is large.}\label{Fig:result}
\end{figure}

\section{Conclusion}\label{Sec:final}
In this paper, we first propose an efficient framework to generate large-scale genuine multipartite entanglement. The generation procedures contain: (1) prepare $N$ singlets; (2) apply the $\sqrt{\mathrm{SWAP}}$ gates. Our generation scheme is simple and efficient to apply in most of the experimental setups. Second, we generalize the stabilizer formalism to analyze the stabilizer structures of the target state, and then design an EW method to verify the genuine multipartite entanglement. The key point of our detection method is to find an appropriate EW operator for the target state. We design a searching algorithm to construct the optimal EW operator under different noise conditions. Finally, we discuss the trade-off between detection robustness and experimental efficiency. Our analytical analysis and the numerical algorithm are generic and can be applied to other entangled states.

The algorithm here can be improved further. To ensure the efficiency of the algorithm, the searching method we use is to find the best solution in a reasonable time. However, it is not guaranteed that the solution is the global optimum of the optimization. There may exist an efficient algorithm to find the optimal EW operator. Besides, our searching region is restricted to the projector-product subsets. Thus, the constructed EW operator is evaluated as the `optimal' in terms of the projector region. As a result, it is interesting to construct the optimal EW operator beyond the stabilizer formalism. One possible way is to employ some modern optimization techniques, such as machine learning, and we leave it for further research.

\section{Acknowledgments}
We acknowledge Q.~Zhao, Z.~Yuan, and P.~Zeng for the insightful discussions. This work was supported by the National Natural Science Foundation of China Grant No.~11875173, the National Key Research and Development Program of China Grants No.~2019QY0702 and No.~2017YFA0303903, and the Zhongguancun Haihua Institute for Frontier Information Technology. Y.Z.~is supported by the National Research Foundation (NRF), Singapore, under its NRFF Fellow program (Award No.~NRF-NRFF2016-02), the Quantum Engineering Program Grant QEP-SF3, the Singapore Ministry of Education Tier 1 Grants No.~MOE2017-T1-002-043, and No.~FQXi-RFP-1809 from the Foundational Questions Institute and Fetzer Franklin Fund (a donor-advised fund of Silicon Valley Community Foundation).

\bibliographystyle{apsrev}

\bibliography{iSwapEW}

\end{document}